\documentclass[journal,twoside,web]{ieeecolor}
\usepackage{lcsys}
\usepackage{cite}
\usepackage{caption}
\usepackage{amsmath,amssymb,amsfonts}
\usepackage{algorithmic}
\usepackage{graphicx}
\usepackage{textcomp}
\usepackage{bm}
\usepackage{soul}
\usepackage{nicefrac}

\usepackage{babel,blindtext}
\usepackage{mathtools}  
\usepackage{commath}
\usepackage{array}
\usepackage{derivative}
\usepackage{enumerate}
\usepackage[colorlinks,allcolors=black]{hyperref} 
\usepackage[capitalize,nameinlink]{cleveref}

\newtheorem{theorem}{Theorem}
\newtheorem{lemma}{Lemma}

\newtheorem{corollary}{Corollary}[lemma]

\newtheorem{remark}{Remark}
\newtheorem{assumption}{Assumption}

\usepackage[usenames,dvipsnames]{xcolor}
\usepackage{tikz}
\usetikzlibrary{positioning}
\usetikzlibrary{decorations.pathreplacing,calligraphy}

\usepackage{stackengine}
\newcommand\barbelow[1]{\stackunder[1.2pt]{$#1$}{\rule{.8ex}{.075ex}}}

\def\BibTeX{{\rm B\kern-.05em{\sc i\kern-.025em b}\kern-.08em
    T\kern-.1667em\lower.7ex\hbox{E}\kern-.125emX}}
\markboth{\journalname, VOL. XX, NO. XX, XXXX 2017}
{Author \MakeLowercase{\textit{et al.}}: Preparation of Papers for textsc{IEEE Control Systems
Letters} (November 2021)}

\usepackage[colorlinks,allcolors=black]{hyperref} 
\usepackage[capitalize,nameinlink]{cleveref}

\definecolor{blue}{rgb}{0.13, 0.67, 0.8}
\definecolor{purple}{rgb}{0.41, 0.16, 0.38}
\definecolor{black}{rgb}{0, 0, 0}
\DeclareRobustCommand\leb  {\tikz[baseline=-0.6ex]\draw[thick, color=blue] (0,0)--(0.2,0);}
\DeclareRobustCommand\lep  {\tikz[baseline=-0.6ex]\draw[thick, color=purple] (0,0)--(0.2,0);}
\DeclareRobustCommand\ledas{\tikz[baseline=-0.6ex]\draw[thick,dashed] (0,0)--(0.3,0);}
\DeclareRobustCommand\leblack{\tikz[baseline=-0.6ex]\draw[thick, color=black] (0,0)--(0.3,0);}

\usepackage{type1cm}
\usepackage{eso-pic}
\usepackage{color}

\makeatletter
\AddToShipoutPicture{%
            \setlength{\@tempdimb}{.5\paperwidth}%
            \setlength{\@tempdimc}{.5\paperheight}%
            \setlength{\unitlength}{1pt}%
            \put(\strip@pt\@tempdimb,\strip@pt\@tempdimc){%
        \makebox(0,0){\rotatebox{45}{\textcolor[gray]{0.92}%
        {\fontsize{6cm}{6cm}\selectfont{Preprint}}}}%
            }%
}

\begin{document}
\title{Approximation-free Prescribed Performance Control with Prescribed Input Constraints}
\author{Pankaj K Mishra and Pushpak Jagtap, \IEEEmembership{Member, IEEE}
\thanks{Article is submitted to IEEE Control Systems Letters.}
\thanks{This work was partly supported by the Google Research Grant, the SERB Start-up Research Grant, and the CSR Grant by Nokia Corporation.}
\thanks{The authors are with the Robert Bosch Centre for Cyber Physical Systems,
Indian Institute of Science, Bangalore 560012, India (e-mail:
\{pankajmishra,pushpak\}@iisc.ac.in)}}
\maketitle
\begin{abstract}
This paper considers the tracking control problem for an unknown nonlinear system with time-varying bounded disturbance subjected to a prescribed performance and input constraints. When performance and input constraints are specified simultaneously for such a problem, a trade-off is inevitable. Consequently, a feasibility condition for prescribing performance and input constraints is devised to address such difficulties of arbitrary prescription. In addition, an approximation-free controller with low complexity is proposed, which ensures that the constraints are never violated, provided that the feasibility condition holds. Finally, simulation results corroborate the effectiveness of the proposed controller.
\end{abstract}

\section{INTRODUCTION}

Different methods have been developed by academic and professional researchers for designing controllers for nonlinear systems. Despite these efforts, designing controllers for systems subjected to constraints and unknown time-varying disturbances remains challenging. Various constraints arise in most practical systems, including performance constraints, saturation, physical stoppages, and safety requirements. Therefore, constraints cannot be avoided while designing controllers for practical systems. For controller design, constraints can typically be prescribed in two forms: prescribed performance constraints (PPC) on some variable (such as tracking error) and prescribed input constraints (PIC). 

A wide variety of methods have been developed to address PPC, including: reference governors \cite{rg}, model predictive control \cite{mpc}, funnel control \cite{fc}, barrier Lyapunov functions (BLF) \cite{Mishra2021f}, prescribed performance control \cite{Bechlioulisl2013}, control barrier functions \cite{cbf}, and extremum seeking control \cite{es}.  As far as the literature is concerned, BLF has been extensively used in dealing with constraints. That's because its design methodology allows it to incorporate many Lyapunov-based nonlinear control techniques. However, for uncertain, unknown systems, the design lacks simplicity. The approach described in \cite{Bechlioulisl2013} is well-suited to a diverse set of situations \cite{Zhang2019a} since it is low-complexity, approximation-free, and robust. Much research and development have gone into the controller design for nonlinear systems that undergo input saturation. One can refer to the results in  \cite{Wen2011,Chen2015,Xiao2012}. It follows that controller design for nonlinear systems subjected to either PPC or PIC is a well-established field of study.

Improving performance with limited resources is always difficult. Same with PPC and PIC \cite{Yong2020}, PPC aims for lower steady-state error, safe transient response, and fast convergence of tracking error. In contrast, PIC focuses on actuator safety or control effort minimization. Thus, very few results are available addressing PPC and input saturation, notably \cite{Hopfe2010,Hopfe2010a,berger2022input,Kanakis2020}. In \cite{Hopfe2010}, works are done for linear systems,  nonlinear systems in \cite{Hopfe2010a,Kanakis2020}.  Also, in  \cite{Hopfe2010,Hopfe2010a, berger2022input} authors relax the PPC  whenever the input saturation is active, and in \cite{Kanakis2020} assumptions are made on the existence of a feasible set of control input for a given initial conditions and actuator saturation limit. Moreover, given any desired trajectory for an uncertain nonlinear system with unknown bounded disturbances and arbitrary PIC, it is certainly impractical to guarantee that the desired trajectory is trackable. For example, a large external disturbance or a desired trajectory with a large upper bound will inevitably necessitate the same level of opposing control command, which may extend beyond the PIC \cite{Asl2019}. Thus, before prescribing input constraints, one must look for the feasible condition for  PIC. Further,  many practical systems always operate in some specified regions where they are controllable under PIC \cite{Yong2020}. In the presence of PIC, one cannot globally stabilize the unstable system. There is always a feasible set of initial conditions for PIC. Also, global results are not attained in many PPC studies  \cite{Theodorakopoulos2015,Bechlioulis2017,Zhang2017} on tracking error. The prescribed performance function choice depends on the constrained variable's initial state. In \cite{Cao2022g}, global results were achieved by transiently relaxing the PPC. However, as discussed, it makes no sense to pursue global results when there is a PIC. In addition, arbitrary PPC makes no sense because there may be an initial condition of error variable within the initial bounds of PPC that does not belong to the set of initial conditions that are feasible for PIC. Therefore, we must seek a viable PPC for a PIC.

  Motivated by the above discussions and aforementioned works, a controller has been developed in this paper with the following listed contributions: 
\begin{enumerate}
    \item An approximation-free low complexity controller has been proposed for the nonlinear system with PPC and PIC. {Below is a representation of the controller structure:} $$\frac{2\bar \upsilon}{\pi}\arctan\left(\frac{\pi}{2\bar \upsilon}\tan\left(\frac{\pi e}{2\psi}\right)\right),$$ where $\bar \upsilon$ and  $\psi$ represent PIC and PPC respectively, and $e$ is tracking error.
 \item The above novel control structure consists of  both PPC and PIC in its design. This simplifies the task of deriving a feasibility condition for PPC and PIC, avoiding the need to relax PPC when input approaches its constraints.
\end{enumerate}
The remainder of the paper is structured as follows. In Section II, preliminaries and problem formulation are presented. It also contains key assumptions for the prescription of input constraints. Section III presents the design of the controller. Section IV presents the few lemmas used for the stability analysis in Section V.  Section VI presents  the simulation results and discussion. Finally, Section VII concludes the paper.

\section{Preliminaries and Problem Formulation}
\textbf{Notations:} We denote the set of real, positive real, nonnegative real, and positive integer numbers by $\mathbb{R}$, $\mathbb{R}^+$, $\mathbb{R}_0^+ $, and $\mathbb{N}$, respectively. $\mathbb{N}_n$: $\{1,\ldots , n\}$, $n$ is positive integer. $\mathcal{L}^\infty$ represents the set of all essentially bounded measurable functions. For $x (t)\in \mathbb{R}$, $x\uparrow a$: $x$ approaches a real value $a$ from the left side,  $x\downarrow a$:  $x$ approaches a real value $a$ from the right side, and $x^{(n)}$ represent $n$th time derivative of signal $x$.  $\binom{m}{k}=\frac{m(m-1)\cdots(m-k+1)}{k!}$ denote the binomial coefficients. $L\{\cdot\}$ denotes the Laplace transform, and $s$ is the Laplace variable.  

Consider a class of strict-feedback nonlinear system 
\begin{equation}\label{sys1}
\begin{split}
\dot \xi_i&=\xi_{i+1}, ~\forall i \in \mathbb{N}_{n-1},\\
\dot \xi_n&=f\left({\bm \xi}\right)+g\left({\bm \xi}\right)\upsilon+d,\\
y&=\xi_1,
\end{split}
\end{equation} 
where  $ \bm{\xi}(t)=[\xi_1(t), \ldots, \xi_n(t)]^T \in \mathbb{R}^n$ is the state vector,  $f:\mathbb{R}^{n}\rightarrow\mathbb{R}$ is the unknown smooth nonlinear function, $g:\mathbb{R}^{n}\rightarrow\mathbb{R}$ is the unknown control coefficient, $d(t) \in \mathbb{R}$ is the unknown piecewise continuous bounded disturbance, $\upsilon (t)  \in \mathbb{U}\subseteq \mathbb{R}$ and $y(t) \in \mathbb{R}$ are the input and output of the system, respectively. 

The control problem is to design a control law $\upsilon$ such that $(i)$ the output $\xi_1(t)$ track the desired output $\xi_\mathsf{d}(t)\in \mathbb{R},~ \forall t \in \mathbb{R}^{+}_{0}$, $(ii)$ output tracking error defined as  $\Tilde\xi\coloneqq\xi_1-\xi_\mathsf{d},$ follow its prescribed performance constraints $\psi(t)\in \mathbb{R}$, defined as   $\psi(t):=\psi_0e^{-\mu t}+\psi_\infty,$ such that $|\Tilde\xi|<\psi(t),~ \forall t \in \mathbb{R}^{+}_{0},$ where $\psi_0$ is a positive constant, and $\psi_\infty$ and $ \mu$ are positive and nonnegative constants,  represent the bounds on the steady-state error and the decay rate of the tracking error, respectively, and $(iii)$ all the closed-loop signals are bounded.

In addition, one of our problems will be to seek the feasibility condition for the PIC and PPC. Obtaining such a feasibility condition will necessitate specific knowledge of the system's dynamics, disturbances, and tracking performance parameters in terms of their upper bounds on the signals. A few assumptions are required for this are listed below.
\smallskip

\begin{assumption}\label{af}
\cite{7150402,4717255, 6075525, 7289399} The unknown map $f$ satisfies the Lipschitz continuity condition, that is, for all $\bm x,\bm x'\in \mathbb{R}^n$, there exists a constant $k_l\in\mathbb{R}^+$ such that the following holds 
\begin{align*}
 |f(\bm x)-f(\bm x')|\le k_l||\bm x-\bm x'||_{p^*},
\end{align*}
where $k_l$ is a known Lipschitz constant and $||\cdot||_{p^*}$ known as the $p^*$ norm in the $\mathbb{R}^n$.
\end{assumption}
Note that one can use the Lipschitz constant inference approaches proposed in \cite{wood1996estimation,Bubeck2011,malherbe2017global} to estimate the Lipschitz constant of unknown dynamics from a finite number of data collected from the system.
\smallskip
\begin{assumption}\label{ag}
There exist a known constant $\barbelow g>0$ and a  constant $\bar g\ge\barbelow g$, such that $\barbelow g\le g(x)\le\bar g$ for all $x \in\mathbb{R}^n$.
\end{assumption}
\smallskip
\begin{assumption}\label{ad}
There exists known constant $\bar d\ge 0$ such that disturbances $|d(t)|\le \bar d$ for all $t\in \mathbb R_0^+$.
\end{assumption}
\smallskip
\begin{assumption}\label{ade}
For a given desired trajectory $\xi_\mathsf{d}$, there exists a constant $\bar\xi_\mathsf{d}>0$, such that  $||\bm{\xi_\mathsf{d}}(t)||_{\infty}<\bar\xi_\mathsf{d},$ for all $t\in\mathbb{R}^+_0$ for $\bm{\xi_\mathsf{d}}=[\xi_\mathsf{d}, \xi_\mathsf{d}^{(1)}, \hdots,\xi_\mathsf{d}^{(n-1)}]^T.$ 
\end{assumption}

\section{Controller Design}
This section proposes a robust approximation-free controller for \eqref{sys1}. To begin the controller design, we define a filtered tracking error,
\begin{align}
    r\coloneqq\lambda_1\tilde\xi+\lambda_2\dot{\tilde\xi}+\cdots+\lambda_{n-1}{\tilde\xi}^{(n-2)}+{\tilde\xi}^{(n-1)}, \label {r}
\end{align}
where  $\lambda_i, \forall i \in \mathbb{N}_{n-1}$ is a strictly positive constant and following the definition of output tracking error mentioned in the problem statement,
\begin{align}\label{tilde}
    {\tilde\xi}^{(i-1)}=\xi_i-\xi_\mathsf{d}^{(i-1)},~\forall i \in \mathbb{N}_{n}.
\end{align}
Taking the time derivative of \eqref{r} and using \eqref{tilde}, one has
\begin{align}
    \dot r=\lambda_1{\tilde\xi}^{(1)}+\lambda_2{\tilde\xi}^{(2)}+\cdots+\lambda_{n-1}{\tilde\xi}^{(n-1)}+\dot\xi_n-\xi_\mathsf{d}^{(n)}. \label {r1}
\end{align}
Using \eqref{sys1} and \eqref{r1}, closed-loop dynamics can be written as
\begin{align}
    \dot r= \phi+f\left({\bm \xi}\right)+g\left({\bm \xi}\right)\upsilon+d-\xi_\mathsf{d}^{(n)}, \label{r2}
\end{align}
where $\phi=\sum_{i=1}^{n-1}\lambda_i{\tilde\xi}^{(i)}$.

Consider a non-increasing smooth function $\psi_r:\mathbb R_0^+\rightarrow\mathbb R^+$  as a  virtual performance constraint (VPC)  over $r$, defined as 
\begin{align}\label{constr}
    \psi_r(t):=\psi_{r0}{e}^{-\mu_r t} +\psi_{r\infty}, \forall t\in\mathbb{R}_0^+,
\end{align}
where $\psi_{r\infty},\psi_{r0}$ and $\mu_r$ have similar attributes  as of $\psi_{\infty},\psi_{0}$ and $\mu$ for PPC.
Note that, in \eqref{constr},  $\psi_r$ and $\dot \psi_r$ are bounded for all $t\in\mathbb{R}_0^+$ and the bounds are given as
\begin{align}
    \psi_{r\infty}&\le{\psi_r}\le \psi_{r0}+\psi_{r\infty}, \text{~and} \label{psib}\\
    -\mu_r\psi_{r0}&\le\dot \psi_r \le 0. \label{psidb}
\end{align}
 The control input is designed as 
\begin{align}\label{u}
    \upsilon=-\frac{2\bar \upsilon}{\pi}\arctan\left(\cfrac{\pi}{2\bar\upsilon}\tan\left(\frac{\pi r}{2\psi_r}\right)\right),
\end{align}
where $\bar\upsilon$ is PIC, $r$ and $\psi_r$ are as mentioned in \eqref{r}, respectively.
In \eqref{u}, $r$ is designed using \eqref{r} and \eqref{tilde}, with 
\begin{align}
    \lambda_i&=\binom{n-1}{n-i}a^{n-i}, ~a>\mu,~\forall i \in \mathbb{N}_{n-1},\label{lambda}
\end{align}
and  $\psi_r,$ i.e., VPC is chosen based on  the PPC defined in the problem statement, as follows
\begin{align}
      \mu_r&=\mu,\label{mu}\\
      \psi_{r0}&=(a-\mu_r)^{n-1}\psi_0,\label{psir0}\\
      \psi_{r\infty}&=a^{n-1}\psi_\infty.\label{psirinf}
\end{align}


\section{Preliminaries the Stability Analysis}
In this section,  first, a few results will be established, which will motivate the idea behind the selection of parameters of VPC $(\psi_r)$ in \eqref{mu}-\eqref{psirinf} based on PPC $(\psi)$. Further, a few lemmas will be presented, which will be later used in stability analysis. The lemmas are as follows.
\smallskip
\begin{lemma}\label{lem1}
    Consider the  signals $X(t) \in \mathbb{R}$ and $Z(t) \in \mathbb{R}$, such that  $|X(t)|<X_0e^{-\mu_xt}+X_\infty,$ where  have similar attributes  as of $\psi_{\infty},\psi_{0}$ and $\mu$ for PPC. If   $z=\frac{x}{(s+a)^p},$ where $z=L\{Z(t)\}$, $x=L\{X(t)\}$,  and $a>\mu_x$ and $p\in \mathbb{Z}^+$, then $|Z(t)|<Z_0e^{-\mu_xt}+Z_\infty$, with $Z_0=\frac{X_0}{(a-\mu_x)^p}$ and $Z_\infty=\frac{X_\infty}{a_{}^p}.$
\end{lemma}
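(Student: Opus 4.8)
The plan is to read the frequency-domain relation $z = x/(s+a)^p$ in the time domain as convolution with the impulse response of the stable filter $1/(s+a)^p$, and then to bound the resulting integral term by term. Under the zero-initial-condition convention implicit in a transfer-function description, one has $L^{-1}\{1/(s+a)^p\} = \frac{t^{p-1}}{(p-1)!}e^{-at}$ for $t\ge 0$, so that
\begin{align*}
    Z(t) = \int_0^t \frac{(t-\tau)^{p-1}}{(p-1)!}\,e^{-a(t-\tau)}\,X(\tau)\,d\tau .
\end{align*}
Since the kernel is nonnegative on $[0,t]$, the triangle inequality for integrals together with the hypothesis $|X(\tau)| < X_0 e^{-\mu_x\tau} + X_\infty$ would give $|Z(t)| < \int_0^t \frac{(t-\tau)^{p-1}}{(p-1)!}e^{-a(t-\tau)}\bigl(X_0 e^{-\mu_x\tau} + X_\infty\bigr)\,d\tau$ for every $t>0$, while at $t=0$ the bound $|Z(0)|=0 < Z_0 + Z_\infty$ is immediate.

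Next I would split this integral and evaluate each piece with the substitution $\sigma = t-\tau$. The first piece becomes $X_0 e^{-\mu_x t}\int_0^t \frac{\sigma^{p-1}}{(p-1)!}e^{-(a-\mu_x)\sigma}\,d\sigma$, and because $a>\mu_x$ the integrand is positive and integrable on $[0,\infty)$, so this is strictly less than $X_0 e^{-\mu_x t}\int_0^\infty \frac{\sigma^{p-1}}{(p-1)!}e^{-(a-\mu_x)\sigma}\,d\sigma = X_0 e^{-\mu_x t}(a-\mu_x)^{-p} = Z_0 e^{-\mu_x t}$. The second piece similarly satisfies $X_\infty\int_0^t \frac{\sigma^{p-1}}{(p-1)!}e^{-a\sigma}\,d\sigma < X_\infty\int_0^\infty \frac{\sigma^{p-1}}{(p-1)!}e^{-a\sigma}\,d\sigma = X_\infty a^{-p} = Z_\infty$, where the two closed-form values are just the Laplace transform of $\sigma^{p-1}/(p-1)!$ evaluated at $a-\mu_x$ and at $a$ (equivalently a $\Gamma$-function computation). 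Adding the two bounds yields $|Z(t)| < Z_0 e^{-\mu_x t} + Z_\infty$, which is the assertion.

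An equivalent and arguably cleaner route is induction on $p$. The base case $p=1$ is the scalar linear ODE $\dot Z + aZ = X$, $Z(0)=0$, whose variation-of-constants solution $Z(t) = \int_0^t e^{-a(t-\tau)}X(\tau)\,d\tau$ is bounded exactly as above, with $Z_0 = X_0/(a-\mu_x)$ and $Z_\infty = X_\infty/a$. For the inductive step I would factor $1/(s+a)^{p+1} = \frac{1}{s+a}\cdot\frac{1}{(s+a)^p}$: the inner filter produces an intermediate signal obeying the bound with exponents $p$, and then applying the $p=1$ case to that signal multiplies the two constants by $1/(a-\mu_x)$ and $1/a$, giving $X_0/(a-\mu_x)^{p+1}$ and $X_\infty/a^{p+1}$.

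I do not expect a genuine obstacle here; the only point needing care is the bookkeeping around initial conditions, since $z = x/(s+a)^p$ tacitly assumes the filter starts at rest, so I would state that convention explicitly. With it in place, every step is a routine convolution estimate, and the strict inequalities survive because the convolution kernel is strictly positive on $(0,t)$ and the truncated tails $\int_t^\infty$ of the two $\Gamma$-integrals are strictly positive.
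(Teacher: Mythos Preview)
Your proposal is correct. The inductive route you sketch is precisely the paper's argument: they view $1/(s+a)^p$ as a cascade of $p$ first-order filters $1/(s+a)$, bound the output of the first by the explicit integration $|Z_1(t)| < \frac{X_0}{a-\mu_x}(e^{-\mu_x t}-e^{-at}) + \frac{X_\infty}{a}(1-e^{-at}) < \frac{X_0}{a-\mu_x}e^{-\mu_x t} + \frac{X_\infty}{a}$, and then repeat $(p-1)$ times. Your first route---writing out the full impulse response $\frac{t^{p-1}}{(p-1)!}e^{-at}$ and collapsing everything to a single $\Gamma$-integral---is a legitimate shortcut that bypasses the recursion; it buys compactness, while the cascade viewpoint has the advantage (which the paper exploits in the subsequent corollaries) of extending mechanically to mixed filters such as $s^q/(s+a)^p$. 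Your remark about the zero-initial-condition convention is apt; the paper leaves this implicit.
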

\begin{proof}
Here,  $z=\frac{x}{(s+a)^p},$ can be represented as a signal passing through a series of low pass filters as shown in the figure below:
\smallskip
\begin{center}
\begin{tikzpicture}
\node[draw,
    minimum width=1cm,
    minimum height=1cm](f1){$\frac{1}{s+a}$};
    
\node[draw,
minimum width=1cm,
minimum height=1cm,
right=0.5cm of f1](f2){$\frac{1}{s+a}$};

\node[draw,
minimum width=1cm,
minimum height=1cm,
right=2cm of f2](f3){$\frac{1}{s+a}$};

\draw [stealth-](f1.west)--++(-0.5,0)     
    node[midway,above]{$x$};
\draw [-stealth](f1.east)  -- (f2.west)  
    node[midway,above]{$z_1$};   
\draw [-stealth](f2.east)  -- ++(1,0)  
    node[midway,above](a3){}; 
\draw [stealth-](f3.west) -- ++(-0.5,0)  
 node[midway,above](a4){};
 \draw [-stealth](f3.east) -- ++(0.5,0)  
 node[midway,above]{$z$};
\path (f2) -- node[auto=false]{\ldots} ++(3,0);
\draw [decorate,
    decoration = {calligraphic brace,mirror}] (-0.5,-0.8) --  (5,-0.8) node[midway,below]{$p$ blocks};
\end{tikzpicture}
\end{center}
Let $z_1$ be the output of first filter, then $Z_1(t)= L^{-1}(z_1)$, can be written as $Z_1(t)=\int_0^t e^{-a(t-\tau)}X(\tau)d\tau.$
Since $|X(t)|<X_0e^{-\mu_xt}+X_\infty,$ thus we have $|Z_1(t)|<\int_0^t e^{-a(t-\tau)}(X_0e^{-\mu_x\tau}+X_\infty)d\tau.$
Simplifying it, we have $|Z_1(t)|<\frac{X_0}{(a-\mu_x)}(e^{-\mu_xt}-e^{-at})+ \frac{X_\infty}{a}(1-e^{-at}).$
Further it can be written as $ |Z_1(t)|<\frac{X_0}{(a-\mu_x)}e^{-\mu_xt}+\frac{X_\infty}{a}.$

Recursively following the above steps $(p-1)$ times, it can be easily found that $|Z(t)|<\frac{X_0}{(a-\mu_x)^p}e^{-\mu_xt}+\frac{X_\infty}{a_{}^p}.$
\end{proof}
\smallskip
\begin{corollary}\label{cor1}
If in Lemma \ref{lem1}, $z=\frac{s^q}{(s+a)^q}x, q\in \mathbb{Z}^+$; then $|Z(t)|<Z_0e^{-\mu_xt}+Z_\infty$, with $Z_0={X_0}\left(\frac{2a-\mu_x}{a-\mu_x}\right)^q$ and $Z_\infty=2^qX_\infty.$
\end{corollary}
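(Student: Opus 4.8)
The plan is to reduce the statement to Lemma~\ref{lem1} by a binomial expansion of the transfer function. Writing $s = (s+a) - a$, we get
\begin{align*}
\frac{s^q}{(s+a)^q} = \left(1 - \frac{a}{s+a}\right)^{q} = \sum_{k=0}^{q} \binom{q}{k} (-1)^k \frac{a^k}{(s+a)^k}.
\end{align*}
Consequently, in the time domain $Z(t) = \sum_{k=0}^{q} \binom{q}{k} (-1)^k a^k W_k(t)$, where $W_0(t) \coloneqq X(t)$ and, for $k \ge 1$, $W_k(t)$ denotes the signal with Laplace transform $\frac{x}{(s+a)^k}$.

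Next I would bound each $W_k$ separately. For $k \ge 1$, Lemma~\ref{lem1} applied with $p = k$ gives $|W_k(t)| < \frac{X_0}{(a-\mu_x)^k} e^{-\mu_x t} + \frac{X_\infty}{a^k}$; for $k = 0$ this same inequality is precisely the hypothesis $|X(t)| < X_0 e^{-\mu_x t} + X_\infty$ (with the empty products $(a-\mu_x)^0 = a^0 = 1$). Taking the triangle inequality over the binomial sum and inserting these per-term bounds,
\begin{align*}
|Z(t)| &\le \sum_{k=0}^{q} \binom{q}{k} a^k |W_k(t)| \\
&< X_0 e^{-\mu_x t} \sum_{k=0}^{q} \binom{q}{k} \left(\frac{a}{a-\mu_x}\right)^k + X_\infty \sum_{k=0}^{q} \binom{q}{k}.
\end{align*}

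Finally I would evaluate the two sums in closed form by the binomial theorem: $\sum_{k=0}^{q} \binom{q}{k} \left(\frac{a}{a-\mu_x}\right)^k = \left(1 + \frac{a}{a-\mu_x}\right)^q = \left(\frac{2a-\mu_x}{a-\mu_x}\right)^q$ and $\sum_{k=0}^{q} \binom{q}{k} = 2^q$, which delivers $|Z(t)| < X_0 \left(\frac{2a-\mu_x}{a-\mu_x}\right)^q e^{-\mu_x t} + 2^q X_\infty$, i.e. the claimed expressions for $Z_0$ and $Z_\infty$. I do not anticipate a real obstacle; the only points needing a little care are that $a > \mu_x$ keeps $\frac{a}{a-\mu_x}$ positive and finite (so passing to absolute values does not over-weaken the estimate), that the $k=0$ term is supplied by the hypothesis rather than by Lemma~\ref{lem1} itself, and that strictness of the final inequality is inherited from the strict per-term bounds.
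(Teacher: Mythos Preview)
Your argument is correct and complete: the binomial expansion of $\bigl(1-\tfrac{a}{s+a}\bigr)^{q}$, termwise application of Lemma~\ref{lem1} (with the $k=0$ term supplied by the hypothesis), and the two closed-form binomial sums give exactly the stated $Z_0$ and $Z_\infty$. The care you note about $a>\mu_x$ and about strictness is adequate.

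It is, however, a genuinely different route from the paper's. The paper proves the corollary \emph{recursively}: it views $\frac{s^{q}}{(s+a)^{q}}$ as a cascade of $q$ identical blocks $\frac{s}{s+a}$, writes a single block as $1-\frac{a}{s+a}$, bounds the output of one stage by $|X(t)|+a\int_0^t e^{-a(t-\tau)}|X(\tau)|\,d\tau$, and obtains the one-step factors $\frac{2a-\mu_x}{a-\mu_x}$ and $2$; repeating $q$ times yields the $q$th powers. Your approach instead expands the whole transfer function at once and reduces everything to a single invocation of Lemma~\ref{lem1} per term, followed by two binomial-theorem sums. The recursive proof matches the filter-cascade intuition used throughout the paper and makes the subsequent Corollary~\ref{coro1_2} a trivial concatenation of the two pictures; your one-shot expansion is arguably cleaner and avoids the induction, at the cost of hiding the per-stage structure. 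Both deliver the identical bound.
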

\begin{proof}
Similar to Lemma \ref{lem1}, $z=\frac{s}{(s+a)^q}x,$ can be represented as a signal passing through a series of filters as shown in the figure below.
\smallskip
\begin{center}
\begin{tikzpicture}

\node[draw,
    minimum width=1cm,
    minimum height=1cm](f1){$\frac{s}{s+a}$};
    
\node[draw,
minimum width=1cm,
minimum height=1cm,
right=0.5cm of f1](f2){$\frac{s}{s+a}$};

\node[draw,
minimum width=1cm,
minimum height=1cm,
right=2cm of f2](f3){$\frac{s}{s+a}$};

\draw [stealth-](f1.west)--++(-0.5,0)     
    node[midway,above]{$x$};
\draw [-stealth](f1.east)  -- (f2.west)  
    node[midway,above]{$z_1$};   
\draw [-stealth](f2.east)  -- ++(1,0)  
    node[midway,above](a3){}; 
\draw [stealth-](f3.west) -- ++(-0.5,0)  
 node[midway,above](a4){};
 \draw [-stealth](f3.east) -- ++(0.5,0)  
 node[midway,above]{$z$};
\path (f2) -- node[auto=false]{\ldots} ++(3,0);
\draw [decorate,
    decoration = {calligraphic brace,mirror}] (-0.5,-0.8) --  (5,-0.8) node[midway,below]{$q$ blocks};
\end{tikzpicture}
\end{center}
Let $z_1$ be the output of the first filter, then we can write $z_1=x(1-\frac{a}{s+a})$. Further,  we have
\begin{align}
    |Z_1(t)|= |L^{-1}(z_1)|<|X(t)|\hspace{-0.1em}+\hspace{-0.1em}a\hspace{-0.2em}\int_0^t\hspace{-0.2em} e^{-a(t-\tau)}X(\tau)d\tau.\label{sz2}
\end{align}
For the second term of \eqref{sz2}, performing a similar analysis as done in Lemma \ref{lem1}, we have $ |Z_1(t)|<{X_0}\left(\frac{2a-\mu_x}{a-\mu_x}\right)e^{-\mu_xt}+2X_\infty.$
Recursively following the above steps $(q-1)$ times, we have $|Z(t)|<{X_0}\left(\frac{2a-\mu_x}{a-\mu_x}\right)^qe^{-\mu_xt}+2^qX_\infty.$
\end{proof}
\smallskip
\begin{corollary}\label{coro1_2}
If $z=\frac{s^q}{(s+a)^p}x,$ and, $p\ge q$ and $p,q\in \mathbb{Z}^+,$ then $|Z(t)|<Z_0e^{-\mu_xt}+Z_\infty$, with $Z_0={X_0}\frac{(2a-\mu_x)^q}{(a-\mu_x)^p}$ and $Z_\infty=\frac{2^q}{a^{p-q}}X_\infty.$
\end{corollary}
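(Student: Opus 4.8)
The plan is to factor the transfer function $\dfrac{s^q}{(s+a)^p}$ as the cascade $\dfrac{1}{(s+a)^{p-q}}$ followed by $\dfrac{s^q}{(s+a)^q}$, and then invoke Lemma~\ref{lem1} for the first stage and Corollary~\ref{cor1} for the second stage, chaining the resulting bounds.

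Concretely, I would first treat the case $p>q$. Introduce the intermediate signal $W$ with Laplace transform $w=\frac{1}{(s+a)^{p-q}}x$. Since $p-q\in\mathbb{Z}^+$ and $a>\mu_x$, Lemma~\ref{lem1} applies and gives $|W(t)|<W_0e^{-\mu_x t}+W_\infty$ with $W_0=\frac{X_0}{(a-\mu_x)^{p-q}}$ and $W_\infty=\frac{X_\infty}{a^{p-q}}$. The key observation is that $W$ then satisfies a bound of exactly the form required by the hypothesis of Corollary~\ref{cor1}: an exponential-plus-constant envelope with the same decay rate $\mu_x$ and the same pole location $a>\mu_x$.

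Next, since $z=\frac{s^q}{(s+a)^q}w$, applying Corollary~\ref{cor1} to the pair $(w,z)$ in place of $(x,z)$ yields $|Z(t)|<Z_0e^{-\mu_x t}+Z_\infty$ with $Z_0=W_0\left(\frac{2a-\mu_x}{a-\mu_x}\right)^q$ and $Z_\infty=2^q W_\infty$. Substituting the expressions for $W_0$ and $W_\infty$ gives $Z_0=\frac{X_0}{(a-\mu_x)^{p-q}}\cdot\frac{(2a-\mu_x)^q}{(a-\mu_x)^q}=X_0\frac{(2a-\mu_x)^q}{(a-\mu_x)^p}$ and $Z_\infty=2^q\frac{X_\infty}{a^{p-q}}=\frac{2^q}{a^{p-q}}X_\infty$, which is the claimed result.

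Finally I would dispatch the remaining case $p=q$: here the first filter stage is absent, $z=\frac{s^q}{(s+a)^q}x$, and Corollary~\ref{cor1} applied directly to $x$ gives the bound, consistent with the general formulas since then $(a-\mu_x)^{p-q}=a^{p-q}=1$. I do not expect a genuine obstacle; the only points needing care are (i) verifying that $p-q$ is a positive integer before invoking Lemma~\ref{lem1}, which is precisely why the case $p=q$ must be separated out, and (ii) confirming that the intermediate envelope on $W$ is of the exact form demanded by Corollary~\ref{cor1} so the cascade composition is legitimate, while propagating the strict inequalities through each stage.
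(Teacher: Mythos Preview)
Your proposal is correct and follows essentially the same approach as the paper: factor $\frac{s^q}{(s+a)^p}$ as $\frac{1}{(s+a)^{p-q}}$ cascaded with $\frac{s^q}{(s+a)^q}$, apply Lemma~\ref{lem1} to the first stage and Corollary~\ref{cor1} to the second, and compose the bounds. Your explicit treatment of the case $p=q$ is a slight refinement over the paper's sketch, which leaves that boundary case implicit.
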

\begin{proof}
Following figure below and using Lemma \ref{lem1}, we can easily obtain the bounds of $Z_1(t)$.
\smallskip
\begin{center}
\begin{tikzpicture}
    
\node[draw,
minimum width=1cm,
minimum height=1cm](f2){$\frac{1}{s+a}$};

\node[draw,
minimum width=1cm,
minimum height=1cm,
right=1cm of f2](f3){$\frac{1}{s+a}$};

\node[draw,
minimum width=1cm,
minimum height=1cm,
right=0.5cm of f3](f4){$\frac{s}{s+a}$};

\node[draw,
minimum width=1cm,
minimum height=1cm,
right=1cm of f4](f5){$\frac{s}{s+a}$};

\draw [stealth-](f2.west)  -- ++(-0.5,0)  
    node[midway,above]{$x$};   
\draw [-stealth](f2.east)  -- ++(0.25,0)  
    node[midway,above](a3){}; 
\draw [stealth-](f3.west) -- ++(-0.25,0)  
 node[midway,above](a4){};
 \draw [-stealth](f3.east) -- ++(0.5,0)  
 node[midway,above]{$z_1$};
\path (f2) -- node[auto=false]{\ldots} ++(1.5,0);
\draw [decorate,
    decoration = {calligraphic brace,mirror}] (-0.5,-0.8) --  (2.5,-0.8) node[midway,below]{$(p-q)$ blocks};
    
\draw [-stealth](f4.east)  -- ++(0.25,0)  
    node[midway,above](a3){}; 
\draw [stealth-](f5.west) -- ++(-0.25,0)  
 node[midway,above](a4){};
 \draw [-stealth](f5.east) -- ++(0.5,0)  
 node[midway,above]{$z$};
\path (f4) -- node[auto=false]{\ldots} ++(1.5,0);
\draw [decorate,
    decoration = {calligraphic brace,mirror}] (3,-0.8) --  (6,-0.8) node[midway,below]{$q$ blocks};

\end{tikzpicture}
\end{center}
Further, using Corollary \ref{cor1}, it is  straightforward to prove the given result.
\end{proof}

    



\smallskip
\begin{lemma}\label{blemma}
     If $|r|<\psi_{r}$ and  $\lambda_i=\binom{n-1}{n-i}a^{n-i}$, $a>\mu_r$ is a positive design constant, then   for all $ t\in\mathbb{R}_0^+$ and $\forall i\in\{0,1,\ldots, n-1\}$,
     \begin{align}\label{rilde}
         |\tilde\xi^{(i)}(t)|<\frac{(2a-\mu_r)^{i}\psi_{r0}}{(a-\mu_r)^{n-1}}e^{-\mu_rt}+\frac{2^i\psi_{r\infty}}{a^{n-i-1}}.
     \end{align}  
     .
\end{lemma}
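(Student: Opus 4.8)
The heart of the matter is that the particular gains \eqref{lambda} turn the filtered‑error definition \eqref{r} into a stable polynomial operator acting on $\tilde\xi$. By the binomial theorem $(s+a)^{n-1}=\sum_{k=0}^{n-1}\binom{n-1}{k}a^{n-1-k}s^{k}$, and since the coefficient of $\tilde\xi^{(k)}$ in \eqref{r} is $\lambda_{k+1}=\binom{n-1}{n-k-1}a^{n-k-1}=\binom{n-1}{k}a^{n-1-k}$ for $k\le n-2$, with the coefficient of $\tilde\xi^{(n-1)}$ equal to $1$, the definition of $r$ is exactly $r=(s+a)^{n-1}\tilde\xi$ in operator/Laplace form. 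I would therefore begin by recording the inverse relation
\[
\tilde\xi^{(i)}=\frac{s^{i}}{(s+a)^{n-1}}\,r,\qquad i\in\{0,1,\ldots,n-1\}.
\]

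The estimate \eqref{rilde} then follows at once from the filtering bounds already established. By hypothesis $|r(t)|<\psi_{r}(t)=\psi_{r0}e^{-\mu_r t}+\psi_{r\infty}$, so $r$ can be taken as the signal $X$ in Corollary \ref{coro1_2} with $X_0=\psi_{r0}$, $X_\infty=\psi_{r\infty}$, $\mu_x=\mu_r$; choosing $q=i$ and $p=n-1$ (so that $p\ge q$ because $i\le n-1$, and $a>\mu_r$ as required there) gives
\[
|\tilde\xi^{(i)}(t)|<\psi_{r0}\frac{(2a-\mu_r)^{i}}{(a-\mu_r)^{n-1}}e^{-\mu_r t}+\frac{2^{i}}{a^{n-1-i}}\psi_{r\infty},
\]
which is precisely \eqref{rilde}. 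Hence the proof reduces to the algebraic identification of the $\lambda_i$'s with the coefficients of $(s+a)^{n-1}$, after which Corollary \ref{coro1_2} (itself built on Lemma \ref{lem1} and Corollary \ref{cor1}) does all the work. In particular, taking $i=0$ together with the VPC parameters \eqref{psir0}--\eqref{psirinf} recovers the prescribed performance bound $|\tilde\xi(t)|<\psi(t)$, which is the motivation for those choices.

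The only delicate point — and the one I expect to require the most care, though it is routine — is the passage from the operator identity $r=(s+a)^{n-1}\tilde\xi$ to the convolution form $\tilde\xi^{(i)}=s^{i}(s+a)^{-(n-1)}r$: inverting the polynomial $(s+a)^{n-1}$ brings in contributions of the initial data $\tilde\xi(0),\dot{\tilde\xi}(0),\ldots,\tilde\xi^{(n-2)}(0)$. As in the proof of Lemma \ref{lem1}, I would interpret $(s+a)^{-(n-1)}$ as a cascade of $n-1$ first‑order stable filters driven by $r$ and use the zero‑state response to obtain the stated bound; the zero‑input transients decay as $t^{k}e^{-at}$ with $a>\mu_r$ and are absorbed once the filter states are initialized consistently with $\tilde\xi$ and its derivatives. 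Modulo this standard bookkeeping, \eqref{rilde} holds for every $i\in\{0,\ldots,n-1\}$ directly from Corollary \ref{coro1_2}.
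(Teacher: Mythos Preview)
Your proposal is correct and follows essentially the same route as the paper: identify via the choice $\lambda_i=\binom{n-1}{n-i}a^{n-i}$ that $r=(s+a)^{n-1}\tilde\xi$ in the Laplace domain, invert to $L\{\tilde\xi^{(i)}\}=\frac{s^{i}}{(s+a)^{n-1}}L\{r\}$, and then read off \eqref{rilde} directly from Corollary~\ref{coro1_2}. The paper handles the initial-data terms by simply restricting to $\tilde\xi^{(i)}(0)=0$ ``for the sake of simplicity,'' whereas you sketch how the zero-input transients $t^{k}e^{-at}$ (with $a>\mu_r$) would be absorbed; so your treatment of that point is, if anything, slightly more complete than the paper's, but the underlying argument is identical.
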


\begin{proof}
Using the Laplace transformation, \eqref{r} can be written as $ L\{\tilde \xi(t)\}=\frac{L\{r\}}{(s+a)^{n-1}}+\sum_{k=1}^{n-1}(\frac{1}{s^k}-\frac{\sum_{i=1}^{k}\binom{n-1}{n-i}a^{n-i}s^{(i-1)}}{s^k(s+a)^{n-1}})\tilde\xi^{(k-1)}(0).$ Further, we can have $L\{\tilde \xi^{(i)}(t)\}=s^iL\{\tilde \xi(t)\}-\sum_{k=0}^{i-1}s^{i-k-1}\tilde\xi^{(k)}(0),~\forall i \in \mathbb{N}_{n-1}.$
For the sake of simplicity, the proof will be restricted to the situation of $\tilde \xi^{(i)}(0)=0,    ~\forall i\in\{0,1,\ldots, n-1\}$. Now, the general expression for $L\{\tilde \xi^{(i)}(t)\}, i\in\{0,1,\ldots, n-1\},$ can be written as 
\begin{align}
    L\{\tilde \xi^{(i)}(t)\}=\frac{s^iL\{r\}}{(s+a)^{n-1}}. \label{winitial}
\end{align}
Following the hypothesis and Corollary \ref{coro1_2}, one can deduce from \eqref{winitial} that $\forall t\in\mathbb{R}_0^+,$  and $i\in\{0,1,\ldots, n-1\}$,
    $|\tilde\xi^{(i)}(t)|<\frac{(2a-\mu_r)^{i}\psi_{r0}}{(a-\mu_r)^{n-1}}e^{-\mu_rt}+\frac{2^i\psi_{r\infty}}{a^{n-i-1}}.$
\end{proof}
\smallskip
\begin{remark}
    In \eqref{rilde},  substituting the parameter given in \eqref{mu} -  \eqref{psirinf} for $i=0$, yields $|\tilde\xi|<\psi_0 e^{-\mu t}+\psi_\infty$, or $|\tilde\xi|<\psi$,  one of our control goals. Hence, if we can make filtered tracking error $r$ to follow its VPC $\psi_r$, or the hypothesis of the above lemma, i.e., $|r|<\psi_r$, then the goal will be achieved. To achieve the same, control input is designed in \eqref{u}, based on filtered tracking error, VPC and PIC. Next,  a few results based on the above lemma is presented, and further, a few lemmas will be established to aid the stability analysis in a subsequent section. 
\end{remark}
\smallskip
\begin{corollary}\label{colo2_1}
If $|r|<\psi_{r}$ and  $\lambda_i=\binom{n-1}{n-i}a^{n-i}$, $a>\mu_r$ is a positive design constant, then  in \eqref{r2},
 \begin{align}
     |\phi|&<\frac{\psi_{r0}\bar c_1}{(a-\mu_r)^{n-1}}+\frac{\psi_{r\infty}\bar c_2}{(a)^{n-1}},\label{phibound}
 \end{align}
 where $\bar c_1=(2a-\mu_r)\left((3a-\mu_r)^{n-1}-(2a-\mu_r)^{n-1}\right)$ and $\bar c_2=2a\left((3a)^{n-1}-(2a)^{n-1}\right)$.
\end{corollary}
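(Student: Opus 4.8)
The plan is to bound $\phi=\sum_{i=1}^{n-1}\lambda_i\tilde\xi^{(i)}$ term by term using \cref{blemma} and then collapse the resulting binomial sums into closed form. First I would apply the triangle inequality, $|\phi|\le\sum_{i=1}^{n-1}\lambda_i|\tilde\xi^{(i)}|$; since every $\lambda_i=\binom{n-1}{n-i}a^{n-i}$ is strictly positive and the sum is nonempty, the strict inequalities supplied by \cref{blemma} propagate through the sum. Substituting the bound \eqref{rilde} for each $|\tilde\xi^{(i)}(t)|$ and using $e^{-\mu_r t}\le 1$ for all $t\in\mathbb{R}_0^+$ gives
\[
|\phi|<\frac{\psi_{r0}}{(a-\mu_r)^{n-1}}\sum_{i=1}^{n-1}\binom{n-1}{n-i}a^{n-i}(2a-\mu_r)^i+\psi_{r\infty}\sum_{i=1}^{n-1}\binom{n-1}{n-i}a^{n-i}\frac{2^i}{a^{n-i-1}}.
\]

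Next I would evaluate the two sums. Re-indexing with $j=n-i$ turns the first sum into $\sum_{j=1}^{n-1}\binom{n-1}{j}a^j(2a-\mu_r)^{n-j}$; writing this as the full sum over $j=0,\dots,n-1$ minus the $j=0$ term and invoking the binomial theorem $\sum_{j=0}^{n-1}\binom{n-1}{j}a^j(2a-\mu_r)^{n-1-j}=(3a-\mu_r)^{n-1}$, I obtain $(2a-\mu_r)\big((3a-\mu_r)^{n-1}-(2a-\mu_r)^{n-1}\big)=\bar c_1$. For the second sum, $a^{n-i}/a^{n-i-1}=a$, so it equals $a\sum_{i=1}^{n-1}\binom{n-1}{n-i}2^i$; the same re-indexing and binomial argument, now with $\sum_{j=0}^{n-1}\binom{n-1}{j}2^{n-1-j}=3^{n-1}$, yields $2a\big(3^{n-1}-2^{n-1}\big)$, which equals $\bar c_2/a^{n-1}$ after inserting $a^{n-1}$ in numerator and denominator. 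Collecting the two pieces gives exactly \eqref{phibound}.

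The argument is essentially bookkeeping once \cref{blemma} is available; the only place demanding care is the re-indexing combined with the fact that the binomial sums start at $j=1$ rather than $j=0$, so the "missing" $j=0$ contributions $(2a-\mu_r)^{n}$ and $2^{n}$ must be subtracted before the closed forms $\bar c_1$ and $\bar c_2$ appear. I would also remark that the hypothesis $a>\mu_r$ is precisely what keeps $(a-\mu_r)^{n-1}>0$, so the stated bound is finite and well defined.
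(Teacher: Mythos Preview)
Your argument is correct and matches the paper's proof: both bound $|\phi|$ term by term via \cref{blemma} (with $e^{-\mu_r t}\le 1$) and then collapse the two binomial sums. The only cosmetic difference is that the paper invokes the identity $\sum_{i=1}^{n-1}\binom{n-1}{n-i}a^{n-i}h^{i}=h\big((a+h)^{n-1}-h^{n-1}\big)$ from its appendix with $h=2a-\mu_r$ and $h=2a$, whereas you re-derive that identity inline by re-indexing and the binomial theorem.
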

\begin{proof}
Following $\phi$ in \eqref{r2} and using Lemma \ref{blemma}, we have $|\phi|<\frac{\psi_{r0}}{(a-\mu_r)^{n-1}}\sum_{i=1}^{n-1}\binom{n-1}{n-i}a^{n-i}(2a-\mu_r)^i + \frac{\psi_{r\infty}}{a^{n-1}}\sum_{i=1}^{n-1}\binom{n-1}{n-i}a^{n-i}(2a)^i$. Further following the identity given in Appendix \ref{appendix1}, i.e.,  $\sum_{i=1}^{n-1}\binom{n-1}{n-i}a^{n-i}h^{i}=h((a+h)^{n-1}-h^{n-1})$, with $h=2a-\mu_r$ and $h=2a$, for the first and second terms, respectively, one can readily obtain \eqref{phibound}.
\end{proof}
\smallskip
\begin{corollary}\label{coro2_2}
If $|r|<\psi_{r}$ and  $\lambda_i=\binom{n-1}{n-i}a^{n-i}$, $a>\mu_r$ is a positive design constant, then in \eqref{r2},  
\begin{align}
       |f\left({\bm \xi}(t)\right)|\hspace{-0.1em}<\hspace{-0.1em} k_ln^{1/p^*}\hspace{-0.2em}\left(\hspace{-0.2em}\frac{(2a-\mu_r)^{n-1}\psi_{r0}}{(a-\mu_r)^{n-1}}\hspace{-0.1em}+\hspace{-0.1em}\frac{(2a)^{n-1}\psi_{r\infty}}{a^{n-1}}\hspace{-0.1em}+\hspace{-0.1em}\bar \xi_\mathsf{d}\hspace{-0.2em}\right).\label{fbound}
 \end{align}
\end{corollary}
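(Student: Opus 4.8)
The plan is to bound $|f(\bm\xi(t))|$ by invoking the Lipschitz property of $f$ from Assumption~\ref{af} against the zero reference point, and then to replace the norm $\|\bm\xi(t)\|_{p^*}$ by a bound assembled componentwise from Lemma~\ref{blemma} and Assumption~\ref{ade}. First I would write, using $f(\bm 0)$ as an anchor, $|f(\bm\xi(t))| \le |f(\bm 0)| + k_l\|\bm\xi(t)\|_{p^*}$; since $f(\bm 0)$ is just an unknown constant one can absorb it, but the cleaner route the authors seem to intend is to bound $\|\bm\xi(t)\|_{p^*}$ directly via $\|\bm\xi\|_{p^*}\le n^{1/p^*}\|\bm\xi\|_\infty$ (equivalence of $\ell_{p^*}$ and $\ell_\infty$ norms on $\mathbb{R}^n$), so I would aim for $|f(\bm\xi(t))| < k_l n^{1/p^*}\|\bm\xi(t)\|_\infty$, reading the statement as having folded the constant term into the slack of the strict inequality (or tacitly assuming $f(\bm 0)=0$, consistent with the ``restricted to $\tilde\xi^{(i)}(0)=0$'' simplification used earlier).

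Next I would relate $\|\bm\xi(t)\|_\infty$ to the tracking-error derivatives. Since $\xi_i = \tilde\xi^{(i-1)} + \xi_\mathsf{d}^{(i-1)}$ for $i\in\mathbb{N}_n$ by \eqref{tilde}, the triangle inequality gives $|\xi_i(t)| \le |\tilde\xi^{(i-1)}(t)| + |\xi_\mathsf{d}^{(i-1)}(t)| < |\tilde\xi^{(i-1)}(t)| + \bar\xi_\mathsf{d}$, the last step by Assumption~\ref{ade}. Then I would apply Lemma~\ref{blemma}: under the hypothesis $|r|<\psi_r$ and the choice $\lambda_i=\binom{n-1}{n-i}a^{n-i}$, for each $i\in\{0,\ldots,n-1\}$ we have $|\tilde\xi^{(i)}(t)| < \frac{(2a-\mu_r)^i\psi_{r0}}{(a-\mu_r)^{n-1}}e^{-\mu_r t} + \frac{2^i\psi_{r\infty}}{a^{n-i-1}}$. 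Both coefficient sequences $\frac{(2a-\mu_r)^i}{(a-\mu_r)^{n-1}}$ and $\frac{2^i}{a^{n-i-1}} = \frac{(2a)^i}{a^{n-1}}$ are increasing in $i$ (as $a>\mu_r>0$ forces $2a-\mu_r>a-\mu_r>0$ and $2a>a>0$), so the worst case over $i\in\{0,\ldots,n-1\}$ is $i=n-1$, yielding $\|[\tilde\xi,\dot{\tilde\xi},\ldots,\tilde\xi^{(n-1)}]\|_\infty < \frac{(2a-\mu_r)^{n-1}\psi_{r0}}{(a-\mu_r)^{n-1}}e^{-\mu_r t} + \frac{(2a)^{n-1}\psi_{r\infty}}{a^{n-1}}$, and dropping $e^{-\mu_r t}\le 1$ gives the $t$-independent bound. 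Combining with the $\bar\xi_\mathsf{d}$ offset on every component and multiplying by $k_l n^{1/p^*}$ produces exactly \eqref{fbound}.

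The routine parts are the norm-equivalence constant and the triangle inequalities; the only mild subtlety is the monotonicity-in-$i$ argument that lets us collapse the bound on the whole state vector to its top-derivative bound, and the bookkeeping that $\|\bm\xi\|_\infty = \max_i|\xi_i|$ over the same index set that Lemma~\ref{blemma} covers (indices $1,\ldots,n$ correspond to derivative orders $0,\ldots,n-1$, which is precisely the range of the lemma). I expect the main obstacle — such as it is — to be cleanly handling the anchor term $f(\bm 0)$: either one states $f(\bm 0)=0$ as an implicit normalization (harmless for a tracking problem, and in keeping with the paper's earlier ``for the sake of simplicity'' reductions), or one carries $|f(\bm 0)|$ along and notes it is dominated once the slack in the strict inequalities is accounted for. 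Everything else is direct substitution into Lemma~\ref{blemma} and Assumption~\ref{ade}.
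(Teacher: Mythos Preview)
Your proposal is correct and follows essentially the same route as the paper: norm equivalence $\|\cdot\|_{p^*}\le n^{1/p^*}\|\cdot\|_\infty$, Assumption~\ref{af} to get $|f(\bm\xi)|\le k_l n^{1/p^*}\|\bm\xi\|_\infty$, then the split $\bm\xi=\bm{\tilde\xi}+\bm{\xi_\mathsf{d}}$ bounded via Lemma~\ref{blemma} and Assumption~\ref{ade}. Your monotonicity-in-$i$ argument and the $f(\bm 0)$ caveat are details the paper leaves implicit; it simply writes $|f(\bm\xi(t))|\le k_l n^{1/p^*}\|\bm\xi(t)\|_\infty$ without comment and then invokes Lemma~\ref{blemma} and Assumption~\ref{ade} directly.
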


\begin{proof}
Since $\bm {\xi}(t)\in \mathbb{R}^n$, i.e., finite-dimensional vector space, so all norms are equivalent or one can find constant $c_1$ such that $||\bm {\xi}(t)||_{p^*}\le c_1||\bm {\xi}(t)||_\infty,$ for all $t\in\mathbb{R}_0^+$ Further, using Holder inequality, one can find $c_1=n^{1/p^*}$, holds the equivalence relation. Now using the Assumption \ref{af}, we have 
\begin{align}\label{fineq}
  |f\left({\bm \xi}(t)\right)|\le k_ln^{1/p^*}||\bm {\xi}(t)||_\infty.  
\end{align} Let $\bm{\tilde \xi}= [\tilde \xi, \dot{\tilde\xi}, \ldots, {\tilde\xi}^{(n-1)}]^T$ and $\bm{\xi_\mathsf{d}}=[\xi_\mathsf{d}, \xi_\mathsf{d}^{(1)}, \hdots,\xi_\mathsf{d}^{(n-1)}]^T,$ then following \eqref{tilde}, we have $\bm{\tilde \xi} = \bm{ \xi}-\bm{\xi_\mathsf{d}}$. Substituting $\bm{ \xi}=\bm{\tilde \xi}+\bm{\xi_\mathsf{d}}$ in \eqref{fineq}, and applying triangular inequality, we have $|f\left({\bm \xi}(t)\right)|\le k_ln^{1/p^*}(||\bm {\tilde \xi}(t)||_\infty+||\bm {\xi_\mathsf{d}}(t)||_\infty)$, for all $t\in\mathbb{R}_0^+$. Further using the Lemma \ref{blemma} and Assumption \ref{ade}, one can readily obtain \eqref{fbound}.
\end{proof}
\smallskip
\begin{corollary} \label{cor2.3}
If $|r|<\psi_{r}$ and  $\lambda_i=\binom{n-1}{n-i}a^{n-i}$, $a>\mu_r$ is a positive design constant, then  
\begin{equation}\label{rdot}
    \begin{split}
         \dot r&<\psi_{r0}c_1+\psi_{r\infty}c_2+\bar\xi_\mathsf{d}c_3+\bar d+ g\upsilon, \text{and}\\
    \dot r&>-\psi_{r0}c_1-\psi_{r\infty}c_2-\bar\xi_\mathsf{d}c_3-\bar d+g\upsilon. 
    \end{split}
\end{equation}
 where  $c_1=\frac{\bar c_1+k_ln^{1/p^*}(2a-\mu_r)^{n-1}}{(a-\mu_r)^{n-1}}$, $c_2=\frac{\bar c_2+k_ln^{1/p^*}(2a)^{n-1}}{a^{n-1}}$, and $c_3=k_ln^{1/p^*}+1$ are positive constants.
\end{corollary}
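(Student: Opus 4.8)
The plan is to bound each term on the right-hand side of the closed-loop dynamics \eqref{r2} separately and then collect coefficients. Starting from $\dot r = \phi + f(\bm\xi) + g(\bm\xi)\upsilon + d - \xi_\mathsf{d}^{(n)}$, I would write the two one-sided estimates $\dot r < |\phi| + |f(\bm\xi)| + |d| + |\xi_\mathsf{d}^{(n)}| + g\upsilon$ and $\dot r > -|\phi| - |f(\bm\xi)| - |d| - |\xi_\mathsf{d}^{(n)}| + g\upsilon$, deliberately keeping the control term $g\upsilon$ intact rather than bounding it, since its sign is not fixed at this stage and it is handled later in the stability analysis (note $g>0$ by Assumption \ref{ag}, so the term is well defined and finite whenever $\upsilon$ is).

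Next I would substitute the already-established estimates: \eqref{phibound} from Corollary \ref{colo2_1} for $|\phi|$, \eqref{fbound} from Corollary \ref{coro2_2} for $|f(\bm\xi)|$, Assumption \ref{ad} for $|d|\le\bar d$, and Assumption \ref{ade} for $|\xi_\mathsf{d}^{(n)}|<\bar\xi_\mathsf{d}$. It then remains to group terms according to the three factors $\psi_{r0}$, $\psi_{r\infty}$ and $\bar\xi_\mathsf{d}$: the $\psi_{r0}$-proportional contributions are $\bar c_1/(a-\mu_r)^{n-1}$ from $|\phi|$ and $k_l n^{1/p^*}(2a-\mu_r)^{n-1}/(a-\mu_r)^{n-1}$ from $|f(\bm\xi)|$, whose sum is exactly $c_1$; similarly the $\psi_{r\infty}$-contributions $\bar c_2/a^{n-1}$ and $k_l n^{1/p^*}(2a)^{n-1}/a^{n-1}$ sum to $c_2$; and the $\bar\xi_\mathsf{d}$-contributions $k_l n^{1/p^*}$ (from $|f(\bm\xi)|$) and $1$ (from $|\xi_\mathsf{d}^{(n)}|$) sum to $c_3$. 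The term $\bar d$ carries over unchanged, which yields \eqref{rdot}.

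I do not expect a genuine obstacle here; the argument is a direct assembly of Corollaries \ref{colo2_1}–\ref{coro2_2} together with Assumptions \ref{ad}–\ref{ade}. The only points needing a little care are retaining $g\upsilon$ untouched instead of estimating it, and the bookkeeping verifying that the collected coefficients coincide with the claimed $c_1$, $c_2$, $c_3$; strictness of the two inequalities is inherited from the strict bounds in the cited corollaries.
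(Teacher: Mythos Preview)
Your proposal is correct and follows essentially the same route as the paper: apply the triangle inequality to $\phi+f(\bm\xi)+d-\xi_\mathsf{d}^{(n)}$, invoke Corollaries~\ref{colo2_1}--\ref{coro2_2} and Assumptions~\ref{ad}--\ref{ade}, and then insert the resulting bound into \eqref{r2} while leaving $g\upsilon$ untouched. Your explicit regrouping of coefficients into $c_1,c_2,c_3$ is exactly what the paper does implicitly in arriving at \eqref{phifd}.
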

\begin{proof}
It is straightforward to write
\begin{align}
    |\phi+  f\left({\bm \xi}\right)+ d-\xi_\mathsf{d}^{(n)}|\le  |\phi|+  |f\left({\bm \xi}\right)|+ |d|+|\xi_\mathsf{d}^{(n)}|.
\end{align}
Using the corollaries \ref{colo2_1} and \ref{coro2_2}, and following assumptions  \ref{ad} and \ref{ade}, one can have the following inequality
\begin{align}
    |\phi+  f\left({\bm \xi}\right)+ d-\xi_\mathsf{d}^{(n)}|< \psi_{r0}c_1+\psi_{r\infty}c_2+\bar\xi_\mathsf{d}c_3+\bar d. \label{phifd}
\end{align}
 Further, using \eqref{phifd} in \eqref{r2}, one gets \eqref{rdot}.
\end{proof}
\smallskip
\begin{lemma}\label{lemmainf}
If the filtered tracking error $r$ given in \eqref{r} is transgressing its upper bound mentioned in \eqref{constr}, then  $(r-\psi_{r})$ will approach $0$ from the left side and 
 \begin{align}\label{elemmainf}
   \lim_{(r-\psi_r)\uparrow 0}{\dot r}\ge-\mu_r\psi_{r0}.  
 \end{align}
\end{lemma}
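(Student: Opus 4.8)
The plan is to establish the two assertions separately, both by elementary continuity arguments. For the control law \eqref{u} to be well defined at $t=0$ one must have $|r(0)|<\psi_r(0)$, so $r(0)-\psi_r(0)<0$. On any interval on which $|r|<\psi_r$ holds, $\dot r$ is bounded---this follows from Corollary \ref{cor2.3} together with $|\upsilon|<\bar\upsilon$, which is immediate from \eqref{u} since $|\arctan(\cdot)|<\pi/2$---so $r$ is (locally) Lipschitz and, in particular, continuous up to the first instant at which it leaves the set $\{|r|<\psi_r\}$. Now suppose $r$ transgresses its upper bound and put $t^\ast:=\inf\{t\ge 0:\ r(t)\ge\psi_r(t)\}$. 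Then $t^\ast>0$, and by continuity $r(t^\ast)=\psi_r(t^\ast)$ while $r(t)-\psi_r(t)<0$ for every $t\in[0,t^\ast)$; hence $(r-\psi_r)\uparrow 0$ as $t\uparrow t^\ast$, which is the first claim.

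For the second claim, set $\eta:=r-\psi_r$. Since $\eta(t)<\eta(t^\ast)=0$ for all $t<t^\ast$, every backward difference quotient $\bigl(\eta(t^\ast)-\eta(t)\bigr)/(t^\ast-t)=-\eta(t)/(t^\ast-t)$ is nonnegative; letting $t\uparrow t^\ast$ through points at which $r$ is differentiable yields $\lim_{(r-\psi_r)\uparrow 0}\dot\eta\ge 0$, that is, $\lim_{(r-\psi_r)\uparrow 0}\dot r\ge\lim_{t\uparrow t^\ast}\dot\psi_r(t)$. Since $\psi_r$ is smooth, $\dot\psi_r(t)=-\mu_r\psi_{r0}e^{-\mu_r t}$ is continuous and, by \eqref{psidb}, satisfies $\dot\psi_r(t)\ge-\mu_r\psi_{r0}$ for every $t\ge 0$; in particular $\lim_{t\uparrow t^\ast}\dot\psi_r(t)=-\mu_r\psi_{r0}e^{-\mu_r t^\ast}\ge-\mu_r\psi_{r0}$. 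Chaining the two inequalities gives $\lim_{(r-\psi_r)\uparrow 0}\dot r\ge-\mu_r\psi_{r0}$, which is \eqref{elemmainf}.

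The step that needs care---and the one I would check most closely---is the regularity of $\dot r$: because $d$ is only piecewise continuous, $\dot r$ need not be continuous, so ``$\lim_{t\uparrow t^\ast}\dot r$'' must be read as a limit taken through times where $r$ is differentiable, or, more cleanly, as the statement $D_-\bigl(r-\psi_r\bigr)(t^\ast)\ge 0$ for the lower left Dini derivative; the boundedness of $\dot r$ noted above is exactly what makes $r$ absolutely continuous and legitimizes the difference-quotient argument. Everything else---the sign of the difference quotients and the bound $\dot\psi_r\ge-\mu_r\psi_{r0}$ from \eqref{psidb}---is routine. It is worth noting that this lemma uses neither the explicit arctangent/tangent form of \eqref{u} nor the detailed estimates of Corollary \ref{cor2.3} beyond boundedness of $\dot r$: it is a ``baseline'' fact, to be combined later with a sharper upper estimate of $\lim\dot r$ near the boundary---where the feasibility condition and the saturation limit $\upsilon\to-\bar\upsilon$ do enter---to reach a contradiction and thereby conclude that $r$ in fact never transgresses $\psi_r$.
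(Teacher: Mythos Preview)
Your proof is correct and follows essentially the same route as the paper: from $-\psi_r<r<\psi_r$ one has $r-\psi_r<0$, so at the first crossing $(r-\psi_r)\uparrow 0$, whence $\tfrac{d}{dt}(r-\psi_r)\ge 0$ there, giving $\dot r\ge\dot\psi_r\ge-\mu_r\psi_{r0}$ via \eqref{psidb}. The paper's version is terser and does not discuss regularity; your added care with Dini derivatives and the appeal to Corollary~\ref{cor2.3} for boundedness of $\dot r$ is extra rigor rather than a different idea.
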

\begin{proof}
It is straightforward to assume that before transgressing any bounds, the tracking error must be within its prescribed performance bounds (i.e., $-\psi_r<r<\psi_r$). This implies that 
$    -2\psi_r<r-\psi_r<0$.
Thus, we can analyze that if $r$ is transgressing its upper bound, i.e., $\psi_r$ then $(r-\psi_r)$ will approach $0$ from the left side. Consequently, it is easy to know that when $(r-\psi_r)$ approaches $0$ from the left side, the time derivative of $(r-\psi_r)$ will be greater than equal to $0$. As a result, we have
\begin{align}\label{zphi}
    \lim_{(r-\psi_r)\uparrow 0}{\dot r}\ge\dot\psi_r.
\end{align}
Noting \eqref{psidb}, we can infer from \eqref{zphi} that
 $\lim_{(r-\psi_r)\uparrow 0}{\dot r}\ge-\mu_r\psi_{r0}$.
\end{proof}
\begin{lemma}\label{lemmasup}
If the tracking error is transgressing its lower bound, then $(r+\psi_r)$ will approach $0$ from the right side, and 
 \begin{align}\label{elemmasup}
   \lim_{(r+\psi_r)\downarrow 0}{\dot r}\le\mu_r\psi_{r0}.  
 \end{align}
\end{lemma}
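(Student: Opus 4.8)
The plan is to mirror, line for line, the argument used for Lemma~\ref{lemmainf}, exploiting the symmetry between the upper boundary $\psi_r$ and the lower boundary $-\psi_r$ of the virtual performance funnel. First I would note that, by continuity of $r$ and $\psi_r$, the filtered error cannot reach the lower boundary without having been strictly inside the funnel an instant before, so $-\psi_r < r < \psi_r$ holds just prior to the transgression. Adding $\psi_r$ throughout this chain gives $0 < r+\psi_r < 2\psi_r$; hence $r+\psi_r$ is strictly positive immediately before the violation and can touch the lower boundary $r=-\psi_r$ only by decreasing to $0$. This establishes the first assertion, namely $(r+\psi_r)\downarrow 0$.

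Next I would invoke the standard one-sided derivative fact: a right-differentiable scalar signal that is strictly positive and tends to $0$ from above at a given instant cannot have a positive time derivative there. Applied to $r+\psi_r$, this yields $\frac{d}{dt}(r+\psi_r)\le 0$ in the limit $(r+\psi_r)\downarrow 0$, i.e., $\dot r \le -\dot\psi_r$.

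Finally I would bound $-\dot\psi_r$ via the envelope estimate \eqref{psidb}, that is $-\mu_r\psi_{r0}\le\dot\psi_r\le 0$, which gives $-\dot\psi_r\le\mu_r\psi_{r0}$. Combining this with $\dot r\le-\dot\psi_r$ produces $\lim_{(r+\psi_r)\downarrow 0}\dot r\le\mu_r\psi_{r0}$, which is exactly \eqref{elemmasup}.

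I do not anticipate a genuine obstacle, since the claim is the exact reflection of Lemma~\ref{lemmainf} about $r=0$. The only point demanding care is the sign bookkeeping when passing from ``$r+\psi_r$ is non-increasing at the contact instant'' to ``$\dot r+\dot\psi_r\le 0$'' and then to the final inequality, together with stating the one-sided (Dini) derivative argument for the correct direction of approach — from the right here, rather than from the left as in Lemma~\ref{lemmainf}.
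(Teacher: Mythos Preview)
Your proposal is correct and follows precisely the approach the paper intends: the paper's own proof merely states that the argument is similar to that of Lemma~\ref{lemmainf}, and your line-by-line mirroring (shifting by $+\psi_r$ instead of $-\psi_r$, reversing the direction of approach, and using \eqref{psidb} to bound $-\dot\psi_r\le\mu_r\psi_{r0}$) is exactly that symmetric argument.
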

\begin{proof}
The proof is similar to that of Lemma \ref{lemmainf}.
\end{proof}

\section{Stability analysis}
In this section, stability analysis will be shown based on the results of the lemmas presented in the previous section.

\smallskip
\begin{theorem}\label{theorem1}
Consider a system \eqref{sys1}, with desired state trajectory $\xi_\mathsf{d}$, PPC on output tracking error, $\psi=\psi_0e^{-\mu t}+\psi_\infty$ and a PIC $\bar \upsilon$. If the system \eqref{sys1} satisfies  Assumption \ref{af}-\ref{ade}, and  the control input is designed as in \eqref{u}, then  system output will follow its desired trajectory, tracking error and input will never transgress its PPC and PIC, respectively, and all the closed-loop signals will remain bounded, provided the following feasibility conditions for PIC and PPC are true.
\begin{align}
   &\text{PIC:}~ \bar\upsilon>\frac{1}{\barbelow g}(\psi_{r\infty}c_2+\bar\xi_\mathsf{d}c_3+\bar d),\label{cond}\\
   &\text{PPC:}~ |r(0)|(a\hspace{-0.1em}-\hspace{-0.1em}\mu)^{1-n}\hspace{-0.1em}<\hspace{-0.1em}\psi_0\hspace{-0.1em}<\hspace{-0.1em}\frac{\barbelow g \bar\upsilon \hspace{-0.1em}-\hspace{-0.1em}\psi_{r\infty}c_2\hspace{-0.1em}-\hspace{-0.1em}\bar\xi_\mathsf{d}c_3\hspace{-0.1em}-\hspace{-0.1em}\bar d}{(c_1\hspace{-0.1em}+\hspace{-0.1em}\mu)(a\hspace{-0.1em}-\hspace{-0.1em}\mu)^{n-1}}.\label{cond1}
\end{align}

\end{theorem}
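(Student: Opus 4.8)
The plan is to show that the control law \eqref{u} renders the set $\{|r| < \psi_r\}$ forward invariant, and then to invoke Lemma \ref{blemma} and its corollaries to translate this into the desired PPC guarantee on $\tilde\xi$, the PIC guarantee on $\upsilon$, and boundedness of all closed-loop signals. First I would check that the feasibility condition \eqref{cond1} guarantees $|r(0)| < \psi_r(0)$: since $\psi_r(0) = \psi_{r0} + \psi_{r\infty} = (a-\mu)^{n-1}\psi_0 + a^{n-1}\psi_\infty$ by \eqref{psir0}--\eqref{psirinf}, the left inequality in \eqref{cond1} (namely $|r(0)|(a-\mu)^{1-n} < \psi_0$) yields $|r(0)| < (a-\mu)^{n-1}\psi_0 \le \psi_r(0)$, so the trajectory starts strictly inside the funnel.

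The core of the argument is a contradiction/continuity argument at the boundary. Suppose, for contradiction, that $r$ touches its upper bound $\psi_r$ at some first time; by Lemma \ref{lemmainf}, at that instant $(r-\psi_r)\uparrow 0$ and $\dot r \ge -\mu_r\psi_{r0}$. I would then evaluate $\dot r$ from the closed-loop dynamics \eqref{r2} together with the upper bound in \eqref{rdot} from Corollary \ref{cor2.3}: $\dot r < \psi_{r0}c_1 + \psi_{r\infty}c_2 + \bar\xi_\mathsf{d}c_3 + \bar d + g\upsilon$. The key observation is that as $r \uparrow \psi_r$, the argument $\tfrac{\pi r}{2\psi_r} \uparrow \tfrac{\pi}{2}$, so $\tan(\tfrac{\pi r}{2\psi_r}) \to +\infty$, hence $\arctan(\cdots) \to \tfrac{\pi}{2}$, and therefore $\upsilon \to -\bar\upsilon$. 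Substituting $g \le \bar g$ is the wrong direction; instead, since $\upsilon \to -\bar\upsilon < 0$ and $\barbelow g \le g$, we get $g\upsilon \to g\cdot(-\bar\upsilon) \le -\barbelow g\,\bar\upsilon$. So in the limit $\dot r < \psi_{r0}c_1 + \psi_{r\infty}c_2 + \bar\xi_\mathsf{d}c_3 + \bar d - \barbelow g\,\bar\upsilon$. The right inequality in \eqref{cond1}, rearranged, gives $\psi_0(c_1+\mu)(a-\mu)^{n-1} < \barbelow g\,\bar\upsilon - \psi_{r\infty}c_2 - \bar\xi_\mathsf{d}c_3 - \bar d$, i.e. (using $\psi_{r0} = (a-\mu)^{n-1}\psi_0$ and $\mu_r = \mu$) $\psi_{r0}c_1 + \mu_r\psi_{r0} < \barbelow g\,\bar\upsilon - \psi_{r\infty}c_2 - \bar\xi_\mathsf{d}c_3 - \bar d$, so that $\dot r < -\mu_r\psi_{r0}$ in the limit, contradicting Lemma \ref{lemmainf}. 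The symmetric case (lower bound, using Lemma \ref{lemmasup} and the lower inequality in \eqref{rdot} with $\upsilon \to +\bar\upsilon$) is identical. Hence $|r(t)| < \psi_r(t)$ for all $t \ge 0$.

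With $|r| < \psi_r$ established for all time, Lemma \ref{blemma} with $i=0$ and the parameter choices \eqref{mu}--\eqref{psirinf} gives $|\tilde\xi(t)| < \psi_0 e^{-\mu t} + \psi_\infty = \psi(t)$, which is the PPC goal; the PIC bound $|\upsilon| \le \bar\upsilon$ is immediate from the structure of \eqref{u} since $\tfrac{2\bar\upsilon}{\pi}\arctan(\cdot)$ is bounded by $\bar\upsilon$ in magnitude. Boundedness of all closed-loop signals follows: $\tilde\xi^{(i)}$ for $i \in \{0,\dots,n-1\}$ are bounded by Lemma \ref{blemma}, hence $\xi_i = \tilde\xi^{(i-1)} + \xi_\mathsf{d}^{(i-1)}$ are bounded by Assumption \ref{ade}, $f(\bm\xi)$ is bounded by Corollary \ref{coro2_2}, $g$ is bounded by Assumption \ref{ag}, $d$ by Assumption \ref{ad}, and therefore $\dot r$ is bounded by \eqref{r2}, which closes the loop.

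The main obstacle I anticipate is making the boundary argument fully rigorous: one must argue that if the funnel is ever exited then there is a well-defined first exit (or near-exit) time, and that near that time $\upsilon$ is genuinely driven toward $\mp\bar\upsilon$ with enough margin — this requires care because $\tan$ blows up and one is taking a limit of a strict inequality. The cleanest route is the one Lemmas \ref{lemmainf}--\ref{lemmasup} set up: assume $r-\psi_r$ reaches $0$ from below, take the limit $(r-\psi_r)\uparrow 0$ in \eqref{rdot}, and derive the strict contradiction with \eqref{elemmainf}; the feasibility condition \eqref{cond1} is exactly what is needed to close the numerical gap, and \eqref{cond} is the (weaker) consistency requirement ensuring the upper bound on $\psi_0$ in \eqref{cond1} is positive so that a valid $\psi_0$ exists.
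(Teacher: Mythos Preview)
Your proposal is correct and follows essentially the same approach as the paper: a contradiction argument at the funnel boundary that combines Corollary~\ref{cor2.3}, the limiting behavior $\upsilon\to\mp\bar\upsilon$ of \eqref{u}, Assumption~\ref{ag}, and the feasibility condition \eqref{cond1} to contradict Lemmas~\ref{lemmainf}--\ref{lemmasup}, followed by Lemma~\ref{blemma} for the PPC and a straightforward boundedness chain. The only cosmetic difference is ordering---the paper verifies $|r(0)|<\psi_r(0)$ after disproving the exit proposition rather than before---and your observation that \eqref{cond} merely ensures the upper bound in \eqref{cond1} is positive is exactly right.
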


\begin{proof}
Stability analysis is done using proof-by-contradiction. To begin with the proof, we will first establish proposition \textit{P}$1$ as follows.


\textit{P}$1$: If the \eqref{cond} and \eqref{cond1} is true, input is designed as \eqref{u}, and initially $r$ is within its designed constraints $\psi_r$, then there exists at least a time instant at which $r$ violates its constraints, or,
 $$ \exists~ t_j \text{~such that~} |r(t_j)|>\psi_r(t_j), \forall t_j \in  (t_1,\ldots, t_i, \ldots, t_{\bar n} ),$$ 
 where $t_i<t_{i+1}$, $t_i$ represent $i${th} instant of violation of performance constraint, $i \in \mathbb{N}$, and $\bar n \in \mathbb{N}$. 
We are now prepared for the proof. Suppose that  \textit{P}$1$ is true, then we have the following.
\begin{align}\label{zt1}
    |r(t)|<\psi_r(t),~ \forall t\in [0,t_1).
\end{align}
Suppose that at the instant of time $t_1$, the tracking error is transgressing its performance constraints (i.e., upper or lower bounds). With the following analysis, we will see that error never transgresses its performance constraints. 

Noting \eqref{zt1}, and using \eqref{rdot} of Corollary \ref{cor2.3}, for all $t\in [0,t_1)$, we have
\begin{align}
         \dot r&<\psi_{r0}c_1+\bar\xi_\mathsf{d}c_2+g\upsilon+\bar d, \text{and} \label{proof1}\\
    \dot r&>-\psi_{r0}c_1-\bar\xi_\mathsf{d}c_2+g\upsilon-\bar d. \label{proof2}
\end{align}

Following \eqref{u}, we infer that
\begin{align} 
    \liminf_{(z-\psi)\uparrow0}{\upsilon}&=-\bar \upsilon,\label{liminff}\\
    \limsup_{(z+\psi)\downarrow0}{\upsilon}&=\bar \upsilon\label{limsupp}.
\end{align}
Consequently, following Assumption \ref{ag}, we have 
\begin{align} 
    -\bar g\bar \upsilon&\le\liminf_{(r-\psi_r)\uparrow0}{g\upsilon}\le-\barbelow g\bar \upsilon,\label{liminf}\\
   \barbelow g\bar \upsilon&\le\limsup_{(r+\psi_r)\downarrow0}{g\upsilon}\le\bar g\bar \upsilon.\label{limsup}
\end{align}
Using \eqref{liminf} and \eqref{proof1}, we can infer that for all $ t\in [0,t_1),$
\begin{align}\label{infproof}
\liminf_{(r-\psi_r)\uparrow 0}{\dot r}<\psi_{r0}c_1+\psi_{r\infty}c_2+\bar\xi_\mathsf{d}c_3-\barbelow g\bar\upsilon+\bar d.
\end{align}
Similarly, using \eqref{limsup} and \eqref{proof2}, $\forall t\in [0,t_1),$ we obtain  
\begin{align}\label{supproof}
\limsup_{(r+\psi_r)\downarrow 0}{\dot r}>-\psi_{r0}c_1-\psi_{r\infty}c_2-\bar\xi_\mathsf{d}c_3+\barbelow g \bar\upsilon-\bar d.
\end{align}
Now, recalling \eqref{mu}, \eqref{psir0}, and \eqref{cond1}, it follows
\begin{align}
    \psi_{r0}(c_1+\mu_r)<\barbelow g \bar\upsilon-\psi_{r\infty}c_2 -\bar\xi_\mathsf{d}c_3-\bar d. \label{step}
\end{align}
Further, \eqref{step} can be written as,\vspace{-0.2cm}
\begin{align}
  \psi_{r0}c_1+\psi_{r\infty}c_2+\bar\xi_\mathsf{d}c_3-\barbelow g\bar\upsilon+\bar d&<-\mu_r\psi_{r0}, \text{or}  \label{step1}\\
  -\psi_{r0}c_1-\psi_{r\infty}c_2-\bar\xi_\mathsf{d}c_3+\barbelow g \bar\upsilon-\bar d&>\mu_r\psi_{r0}.\label{step2}
\end{align}
 Now incorporating \eqref{step1} in \eqref{infproof}, and \eqref{step2} in \eqref{supproof}, it can inferred that over $[0, t_1)$ \vspace{-0.3cm}
\begin{align}
\liminf_{(r-\psi_r)\uparrow 0}{\dot r}&<-\mu_r\psi_{r0},\label{infproof1}\\
\limsup_{(r+\psi_r)\downarrow 0}{\dot r}&>\mu_r\psi_{r0}. \label{supproof1}
\end{align}
Recalling lemmas \ref{lemmainf} and \ref{lemmasup}, it can be  inferred that \eqref{infproof1} contradicts \eqref{elemmainf}, and \eqref{supproof1} contradicts \eqref{elemmasup}. Hence, over $[0, t_1)$, tracking error will never approach its  performance constraints. Consequently, it can be concluded that there is no $t_1$ in which the $r$ violates its designed constraint $\psi_r$. Since there does not exist the first instant of violation of the designed constraint, there does not exist any time at which $r$ will violate its constraints $\psi_r$. Therefore, it can be concluded that  \textit{P}$1$ is false.
Now following \eqref{mu}, \eqref{psir0} and \eqref{cond1}, it follows $\psi_{r0}>|r(0)|$, and following \eqref{constr}, $\psi_{r}(0)>|r(0)|$. Thus
 initially $r$ is within its designed VPC ($\psi_r$), and further noting that Proposition \textit{P}$1$ is false, we have the following.
\begin{align}\label{ztf}
    |r(t)|<\psi_r(t),~ \forall t\ge0.
\end{align}
Now, following Lemma \ref{blemma} and using \eqref{mu}-\eqref{lambda}, we have
\begin{align}\label{finalbound}
    |\tilde\xi^{(i)}(t)|\hspace{-0.2em}<\hspace{-0.2em}(2a-\mu_r)^{i}\psi_{0}e^{-\mu_rt}\hspace{-0.2em}+\hspace{-0.2em}{(2a)}^i\psi_{\infty},~ i\hspace{-0.1em}\in\hspace{-0.1em}\{0,1,\ldots, n\hspace{-0.2em}-\hspace{-0.2em}1\}.
\end{align}
Using \eqref{finalbound}, it can be concluded that $\tilde\xi ^{(i)}$ will converge asymptotically to a set, $\Gamma_{i}:=\{\tilde\xi ^{(i)}\in \mathbb{R}:|\tilde\xi ^{(i)}|<{(2a)}^i\psi_{\infty}\}$ and output tracking error $\tilde\xi$ will follow its PPC, i.e., $\psi(t)$, $\forall t\in \mathbb{R}^{+}_0$.
Now, we will seek the boundedness of all the closed-loop signals.

Following \eqref{ztf} and \eqref{finalbound} we have $r \in \mathcal{L}^\infty $ and $\tilde \xi ^{(i)}\in \mathcal{L}^\infty$. Consequently, following assumption \ref{ade} and recalling from \eqref{tilde} that $\xi_i=\tilde\xi_i^{(i-1)}+\xi_\mathsf{d}^{(i-1)}$,  we have  $\xi_i \in \mathcal{L}^\infty,  ~\forall i \in \mathbb{N}_{n}$. Knowing the fact that $f(\bm\xi)$ in \eqref{sys1} is smooth nonlinear function, as a  result, we have $f(\bm\xi) \in  \mathcal{L}^\infty$. Also, it is straightforward to follow from \eqref{u} that if $|r|<\psi_r$, then $|\upsilon|<\bar \upsilon$, thus we have $\upsilon \in\mathcal{L}^\infty $. Following \eqref{sys1} and \eqref{r2}, and with the  help of established boundedness of the signal, and assumptions \ref{ag} and \ref{ad}, that $g(\bm\xi)$ and disturbance are bounded,  we have $\dot\xi_i \in\mathcal{L}^\infty,  ~\forall i \in \mathbb{N}_{n} $ and $\dot r \in\mathcal{L}^\infty,$ respectively. Thus, all closed-loop signals are bounded. This completes the proof.
\end{proof}
\section{Simulation Results and Discussion}
In this section, a simulation study is presented to show the effectiveness of the proposed approach. Consider a control-affine nonlinear system
\begin{equation}\label{sys2}
\begin{split}
    \dot \xi_1 &=\xi_2,\\
    \dot \xi_2&=-0.5(\sin{\xi_1}+\xi_2)+(3+\cos{\xi_2})\upsilon+d,\\
    y&=\xi_1,
    \end{split}
\end{equation} 
where $\xi(t)\in \mathbb{R}$, $\upsilon(t)\in \mathsf{U}\in \mathbb{R}$ and $y$ are the state, the input, and the output of the system \eqref{sys2}, respectively, and $d(t)=0.5\sin{2t}$ is a disturbance. The desired output is $\xi_\mathsf{d}(t)=0.5\sin{t}$. For \eqref{sys2}, correspondingly, following \eqref{sys1}, we can note that $f(\xi)=-0.5(\sin{\xi_1}+\xi_2)$ and $g(\xi)=3+\cos{\xi_2}$, and are assumed to be unknown. For \eqref{sys2}, one can readily obtain $k_l=0.5$, $\barbelow g=2$, $\bar d =0.5$, and for the given desired output, we have $\bar \xi_\mathsf{d}=0.5$. The design parameter $a$ is chosen as  $a=2$, accordingly following corollary \ref{cor2.3}, $c_1=9$, $c_2=6$, and  $c_3=2$. Now, following the feasibility conditions \eqref{cond}, we have PIC: $\bar\upsilon> 0.78$. The goal is to design control law $\upsilon$ such that the output tracks the desired trajectory without transgressing PIC:  $\bar \upsilon=6$, and PPC: $\psi(t)=\psi_0e^{-\mu t}+\psi_{\infty}$, (with $\psi_0=1, \psi_{\infty}=0.01$ and $\mu=1$),   on tracking error. It can be easily verified using \eqref{cond1} that PPC satisfies its feasibility condition for a given PIC, i.e. $\psi_0<1.1$. The controller is designed using \eqref{u}, $\upsilon=-\frac{2\bar \upsilon}{\pi}\arctan\left(\frac{\pi}{2\bar\upsilon}\tan\left(\frac{\pi r}{2\psi_r}\right)\right)$, where, as mentioned in  \eqref{r} $r=\lambda_1\tilde\xi_1+\dot{\tilde{\xi}}_1$, with $\tilde{\xi}_1=\xi_1-\xi_\mathsf{d}$ and $\dot{\tilde\xi}_1=\xi_2-\dot\xi_\mathsf{d}$ as mentioned in  \eqref{r}, and $\psi_r=\psi_{r0}e^{-\mu_rt}+\psi_{r\infty}$. The parameter $\mu_r, \psi_{r0}, \psi_{r\infty},$ and $\lambda_1$ are given by \eqref{mu}-\eqref{lambda}, the aforementioned parameters, and the simulation study is done for two sets of initial conditions, i.e.,  $\bm\xi(0)=[0.4 ~0.29]^T$ and $\bm\xi(0)=[0.6 ~0.29]^T.$ For both sets of initial conditions, it can be observed from Fig. \ref{fig1} that the output tracks the desired trajectory along with its tracking error following the PPC. Also, from Fig. \ref{fig2}, it can be seen that that input follows its PIC. Further, it can be observed from Fig. \ref{fig2} the filtered tracking errors follow its VPC. It is to note that, since $\xi_\mathsf{d}(0)=0$ and $\dot\xi_\mathsf{d}(0)=0.5$, so with  change in the initial condition $\bm\xi(0)$ from $[0.4 ~0.29]^T$ to $[0.4 ~0.29]^T$, $\bm\tilde\xi(0)=[\tilde\xi_1(0)~ \dot{\tilde{\xi}}_1(0)]$ changes from $[0.4 ~-0.21]^T$ to $[0.6 ~-0.21]^T$, respectively. Consequently, $r(0)$ changes from $0.59$ to $0.99$, and also $|r(0)|(a-\mu)^{1-n}$ changes from $0.59$ to $0.99$. It can be calculated that a further increase in $\xi_1(0)$ from $0.6$ will violate the feasibility condition $|r(0)|(a-\mu)^{1-n}<\psi_0$, also it can be observed from Fig. \ref{fig2} that initially control input is near to its PIC, thus motivating the feasibility condition. 
The observation made from the Fig. \ref{fig1} and \ref{fig2} was as expected and stated in Theorem  \ref{theorem1}. 
\begin{figure}
    \centering
    \includegraphics[width=8.5cm,height=5cm]{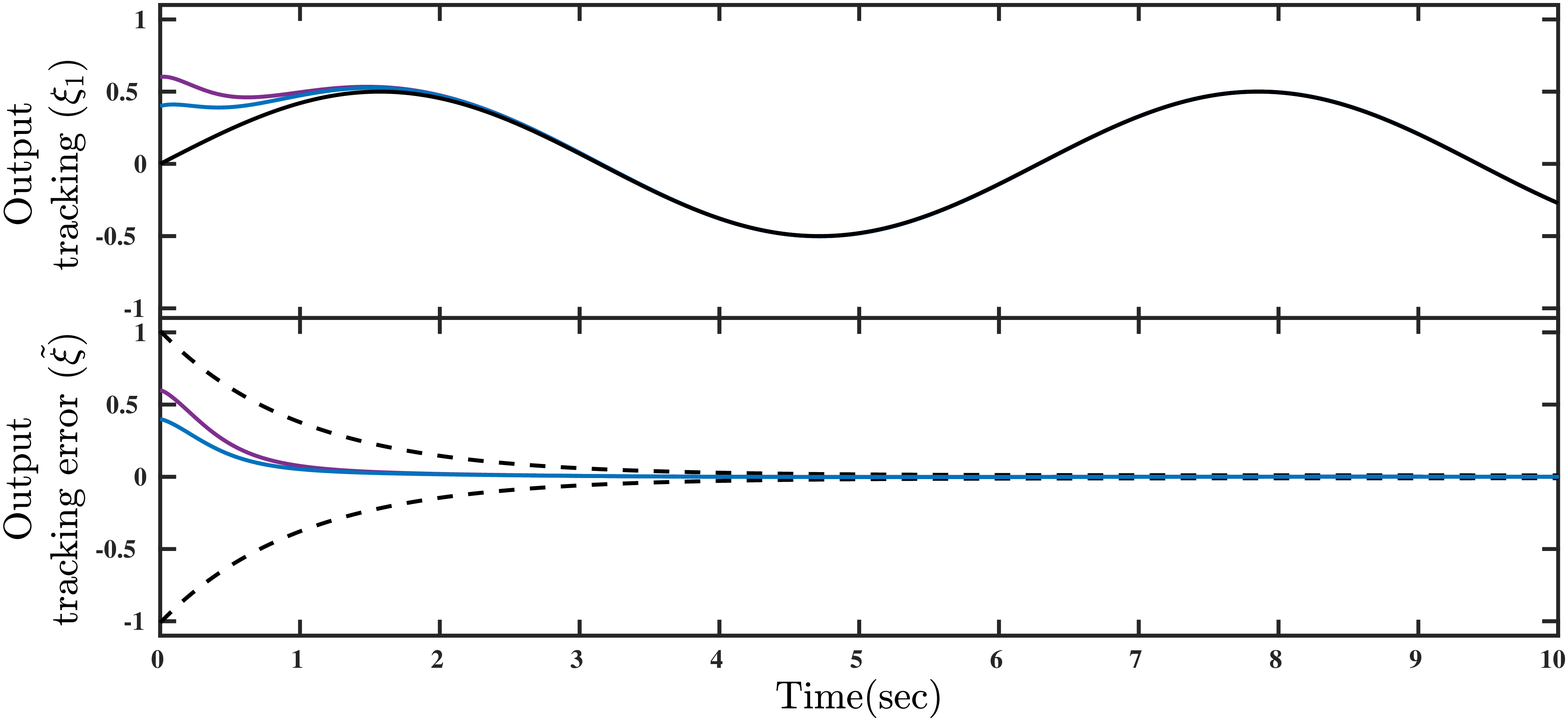}
    \vspace{-0.7cm}
       \caption{Top: output tracking performance (\lep $\xi_1$ for $\bm{\xi}(0)=[0.4 ~0.29]^T$, \leb $\xi_1$ for $\bm{\xi}(0)=[0.6 ~0.29]^T$, \leblack $\xi_\mathsf{d}$(desired output)); Bottom: prescribed performance of tracking error( \ledas PPC $(\psi)$, \lep $\tilde\xi_1$ for $\bm{\xi}(0)=[0.4 ~0.29]^T$, \leb $\tilde\xi_1$ for $\bm{\xi}(0)=[0.6 ~0.29]^T$.) }
    \label{fig1}
 \end{figure}

 \begin{figure}
    \centering
    \includegraphics[width=8.5cm,height=5cm]{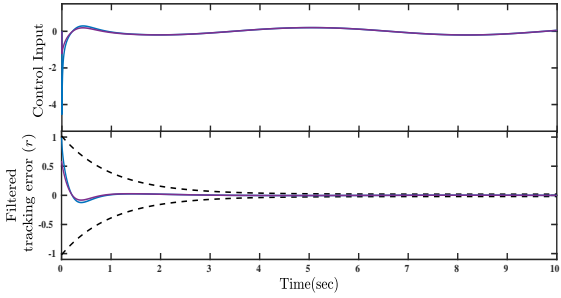}
    \vspace{-0.7cm}
       \caption{Top: control input (\lep $\bar \upsilon$ for $\bm{\xi}(0)=[0.4 ~0.29]^T$, \leb $\bar \upsilon$ for $\bm{\xi}(0)=[0.6 ~0.29]^T$); Bottom: performance of filtered tracking error (\lep $r$ for $\bm{\xi}(0)=[0.4 ~0.29]^T$, {\leb} $r$ for $\bm{\xi}(0)=[0.6 ~0.29]^T$) for the designed VPC ({\ledas $\psi_r$}). }
    \label{fig2}
 \end{figure}

\section{Conclusion}
A controller has been proposed for the tracking problem of control affine nonlinear system subjected to PPC and PIC. The structure of the controller is simple as it does not require any adaptive laws, calculation of any derivatives, system knowledge or approximation. Hence, the controller is easy to implement and an approximation-free controller. Also, the derived feasibility condition for the prescription of constraint restricts arbitrary prescription. The simulation results confirm these facts. In future, the work will be extended for multiagent systems.

 \begin{appendices}
     \section{Proof for $\sum_{i=1}^{n-1}\binom{n-1}{n-i}a^{n-i}h^{i}=h((a+h)^{n-1}-h^{n-1}).$} \label{appendix1}
     Using the binomial  identity $\binom{m}{k}=\binom{m}{m-k}$, we have  $\sum_{i=1}^{n-1}\binom{n-1}{n-i}a^{n-i}h^{i}=\sum_{i=1}^{n-1}\binom{n-1}{i}a^{n-i}h^{i}.$ Further using  the binomial identity $\binom{m}{k}=\binom{m-1}{k}+\binom{m-1}{k-1},$ it can be written as $\sum_{i=1}^{n-1}\binom{n-1}{n-i}a^{n-i}h^{i}=\sum_{i=1}^{n-1}\left(\binom{n}{i}-\binom{n-1}{i-1}\right)a^{n-i}h^{i}.$
Substituting $\sum_{i=1}^{n-1}\binom{n}{i}a^{n-i}h^{i}=(a+h)^{(n)}-h^n-a^n$ and $\sum_{i=1}^{n-1}\binom{n-1}{i-1}a^{n-i}h^{i}=a(a+h)^{(n-1)}-a^n$, it can be written as 
$\sum_{i=1}^{n-1}\binom{n-1}{n-i}a^{n-i}h^{i}=(a+h)^{(n)}-h^n-a(a+h)^{(n-1)}.$ Further simplifying, we have $\sum_{i=1}^{n-1}\binom{n-1}{n-i}a^{n-i}h^{i}=h((a+h)^{n-1}-h^{n-1}).$
 \end{appendices}



\bibliographystyle{ieeetr}
\bibliography{ref1.bib}
\end{document}